\newtheorem{Thm}{Theorem}
\newtheorem{Cor}[Thm]{Corollary}
\newtheorem{Prop}[Thm]{Proposition}
\newtheorem{Def}{Definition}
\newtheorem{Fact}{Fact}
\newcommand\R{\mathsf{R}}
\newcommand\N{\mathbb{N}}
\newcommand{\Cut}{{\sc AustinCut}\xspace}
\newcommand{\AlgTree}{{\sc AlgTree}\xspace}
\newcommand{\AllocationTree}{{\sc AllocationTree}\xspace}
\newcommand{\AlgTreeLike}{{\sc AlgDescendantGraph}\xspace}
\begin{document}

\title{Networked Fairness in Cake Cutting}

\author{Xiaohui Bei
\footnote{Nanyang Technological University, Singapore.}
\and Youming Qiao
\footnote{University of Technology Sydney, Australia.}
\and Shengyu Zhang
\footnote{The Chinese University of Hong Kong, Hong Kong.}
}

\date{}
\maketitle

\begin{abstract}
We introduce a graphical framework for fair division in cake cutting, where
comparisons between agents are limited by an underlying network structure. We
generalize the classical fairness notions of envy-freeness and proportionality to
this graphical setting. Given a simple undirected graph $G$, an allocation is
\textit{envy-free on $G$} if no agent envies any of her neighbor's share, and is
\textit{proportional on $G$} if every agent values her own share no less than the
average among her neighbors, with respect to her own measure. These
generalizations open new research directions in developing simple and efficient
algorithms that can produce fair allocations under specific graph structures.

On the algorithmic frontier, we first propose a moving-knife algorithm that
outputs an envy-free allocation on trees. The algorithm is significantly simpler
than the discrete and bounded envy-free algorithm recently designed
in~\cite{aziz2016discrete} for complete graphs.
Next, we give a discrete and bounded algorithm for computing a proportional
allocation on descendant graphs, a class of graphs by taking a rooted tree and
connecting all its ancestor-descendant pairs.
\end{abstract}

\newcommand{\shengyu}[1]{{\color{blue} #1}}
\section{Introduction}
In a nutshell, economics studies how resources are managed and allocated~\cite{mankiw2014principles}, and 
one of the most fundamental targets is to achieve certain fairness in the allocation
of resources. In a standard setting, different people have possibly different
preferences on parts of a common resource, and a fair allocation aims to
distribute the resource to the people so that everyone feels that she
is treated ``fairly''. When the resource is divisible, the problem is also known as ``cake cutting'',
and two of the most prominent fairness notions in this domain are \emph{envy-freenss} and \emph{proportionality}.
Here an allocation is envy-free if no agent $i$ envies
any other agent $j$; formally, this requires that for all $i,j\in [n]$, $v_i(A_i)
\ge v_i(A_j)$, where $v_k$ is the valuation function of agent $k$ and $A_k$ is
the part allocated to agent $k$.
It is known that an envy-free allocatio always exists~\cite{brams1996fair}.
However, it is until recently that a finite-step
procedure to find an envy-free allocation was proposed~\cite{aziz2016discrete}.
A weaker fairness solution concept is called
\textit{proportionality}, which requires that each agent gets at least the
average of the total utility (with respect to her valuation). Envy-freeness
implies proportionality, but not vice versa.

When studying these fairness conditions, almost all previous works consider all (ordered) pairs of relations among
agents. However, in many practical scenarios, the relations that need to be
considered are restricted.
Such restrictions are motivated from two perspectives:
(1) the system often involves a large number of people with an underlying social network structure, and most people are not aware of their non-neighbors' allocation or even their existence. It is thus inefficient and sometimes meaningless to consider any potential envy between pairs of people who do not know each other;
(2) institutional policies may introduce priorities among agents, and envies between agents of different priorities will not be considered toward unfairness. For example, when allocating network resources such as bandwidth to users, priorities will be given to users who paid a higher premium and they are expected to receive better shares even in a considered ``fair'' allocation.

To capture the most salient aspect of the motivations above, in this paper, we initialize the study of fair cake cutting
 on graphs. Formally, we consider a graph $G$ with vertices being the
agents. An allocation is said to be \emph{envy-free on graph $G$} if no agent envies any of her neighbors in
the graph. An allocation is \emph{proportional on graph $G$} if each agent gets at least the average
of her neighbors' total allocation (with respect to her own valuation).
By only considering the fairness conditions between connected pairs, the
problem could potentially admit fair allocations with more desired properties such
as efficiency, and lead to simpler and more efficient algorithms to produce them.

With regard to envy-freeness, one can easily see that an envy-free allocation in
the traditional definition is also an envy-free allocation on any graph. However,
the algorithm to compute such a solution to the former~\cite{aziz2016discrete} has
extremely high complexity in terms of both the number of queries and the number of
cuts it requires. One goal of this line of studies is to design simple and
easy-to-implement algorithms for fair allocations on special graphs.
It is easy to observe that (cf. Section~\ref{subsec:motivation}), an envy-free
allocation protocol on a connected graph $G$ induces an envy-free protocol on any
of its spanning trees. Therefore, trees form a first bottleneck to the design of
envy-free algorithms for more sophisticated graph families.
The first result we will
show is an efficient moving-knife algorithm to find an envy-free allocation on an
arbitrary tree, removing this bottleneck.
The procedure allocates the cake from the tree root in a top-down fashion and is
significantly simpler than the protocol proposed by Aziz et
al.~\cite{aziz2016discrete}.

When the graph is more than a tree, the added edges significantly increase the
difficulty of producing an envy-free allocation. However, if we lower the requirements and only aim at proportionality, then it is possible to go beyond trees.
Note that though a proportional allocation can be produced for complete graphs by several algorithms, such globally proportional solutions may be no longer proportional on an incomplete graph.
Our second result gives an discrete and efficient algorithm to find a proportional allocation
on descendant graphs, a family of graphs that are generated from rooted trees by
connecting all ancestor-descendant pairs in the tree.
Descendant graphs are interesting from both practical and theoretical viewpoints
(cf. Section~\ref{subsec:motivation}).

\subsection{Related Work}
Cake cutting has been a central topic in recource allocation for decades; see, e.g.,~\cite{brams1996fair,RW98,Pro13}. As mentioned, envy-freeness and proportionality are two of the most prominent solution concepts for fairness consideration in this domain.
An envy-free allocation always exists, even if only $n-1$ cuts are
allowed~\cite{su1999}. The computation of envy-free and proportional allocations
has also received considerable attention, with a good number of discrete and
continuous protocols designed for different settings
~\cite{DS61,RW98,BT95,brams1997moving,barbanel2004cake,procaccia2009thou,procaccia2015cake,aziz2015discrete}.
It had been a long standing open question to design a discrete and
bounded envy-free protocol for $n$ agents, until settled very recently by Aziz and
Mackenzie~\cite{aziz2016discrete}.
For proportionality, much simpler algorithms are known that yield proportional
allocations with a reasonable number of cuts. Even and
Paz~\cite{even1984note}
gave a divide-and-conquer protocol that could produce a proportional allocation
with $O(n\log{n})$ cuts. Woeginger and Sgall~\cite{woeginger2007complexity}
and Edmonds and Pruhs~\cite{edmonds2006cake} later also showed a matching
$\Omega(n\log{n})$ lower bound.


The idea of restrictive relations, i.e., people comparing the allocation only with their peers, have also been considered in scenarios other than cake cutting. Chevaleyre et al.~\cite{chevaleyre2007allocating} proposed a negotiation framework with a topology structure, such that agents may only trade indivisible goods with their neighbors, and the envy also could only happen between connected pairs.
Todo et al.~\cite{todo2011generalizing} generalized the envy-freeness to allow
envies between groups of agents, and discussed the combinatorial auction
mechanisms that satisfy these concepts.

Shortly before our work, Abebe et al~.\cite{abebe2017fair} proposed the same notion of fairness on networks. Despite this, the paper and ours have completely different technical contents. Abebe et al~.\cite{abebe2017fair} shows that proportional allocations on networks do not satisfy any natural containment relations, characterizes the set of graphs for which a single-cutter protocol can produce an envy-free allocation, and analyzes the price of envy-freeness. In this paper, we focus on designing simple and efficient protocols for fair allocations on special classes of graphs. The graph families considered in these two papers are also different.

\section{Our Framework}

In a cake cutting instance, there are $n$ agents to share a cake, which is
represented by the interval $[0,1]$.
Each agent $i$ has an integrable, non-negative density function $v_i: [0, 1] \mapsto \R$.
A piece $S$ of the cake is the union of finitely many disjoint intervals of $[0,
1]$. The valuation of agent $i$ for a piece $S$ is
$V_i(S) = \int_Sv_i(x) dx$.

An allocation of the cake is a partition of $[0, 1]$ into $n$ disjoint pieces, denoted by $A = (A_1, \ldots, A_n)$, such that agent $i$ receives piece $A_i$, where all pieces are disjoint and $\bigcup_i A_i = [0,1]$.

\subsection{Fairness Notions on Graphs}

First we review two classic fairness notions in resource allocation.

\vspace{0.5em}
\noindent{\bf Envy-Freeness.} An allocation $A$ is called {\it envy-free} if for all $i \neq j$, $v_i(A_i) \geq v_i(A_j)$.

\noindent{\bf Proportionality.} An allocation $A$ is called {\it proportional} if for all $i$, $v_i(A_i) \geq \frac1n v_i([0,1])$.
\vspace{0.5em}

To account for a graphical topology, we generalize the above definitions by assuming a network structure over all agents. That is, we assume an undirected simple graph $G = (V,E)$ in which each vertex $i$ represents an agent. Let 
$N(i)$ be the set of neighbors of agent $i$
in $G$.

\begin{Def}[Envy-free on networks]
  An allocation $A$ is called {\it envy-free on a network $G = (V,E)$} if for all $i$ and
  all $j \in N(i)$,
  $v_i(A_i) \geq v_i(A_j)$.
\end{Def}

\begin{Def}[Proportional on networks]
  An allocation $A$ is called {\it proportional on a network $G$} if for all $i$,
  $v_i(A_i) \geq \frac1{|N(i)|}\sum_{j \in N(i)}v_i(A_j).$
\end{Def}

Both definitions can be viewed as generalizations of the original fairness concepts, which correspond to the case of $G$ being the complete graph. 

For envy-freeness, we have the following properties.
\begin{Fact}\label{prop:subgraph}
  Any allocation that is envy-free on a graph $G$ is also envy-free on any
  subgraph $G' \subseteq G$.
\end{Fact}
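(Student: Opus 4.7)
The proof is essentially a direct verification from the definitions, so the plan is short. The plan is to observe that if $G' \subseteq G$, then for every vertex $i$ the neighborhood $N_{G'}(i)$ is contained in the neighborhood $N_G(i)$, and then to simply restrict the envy-free inequalities guaranteed on $G$ to this smaller index set.

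First I would fix an allocation $A$ that is envy-free on $G$, pick an arbitrary agent $i$, and pick an arbitrary neighbor $j \in N_{G'}(i)$. Because $G' \subseteq G$ means that both the vertex set and the edge set of $G'$ are contained in those of $G$, the edge $\{i,j\} \in E(G')$ is also an edge of $G$, so $j \in N_G(i)$. Next, applying the hypothesis that $A$ is envy-free on $G$ yields $v_i(A_i) \geq v_i(A_j)$. Since this holds for every $i$ and every $j \in N_{G'}(i)$, the allocation $A$ satisfies the defining condition of envy-freeness on $G'$.

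There is no genuine obstacle: the claim is a monotonicity property that follows immediately from $N_{G'}(i) \subseteq N_G(i)$. The only thing worth flagging is what ``$G' \subseteq G$'' is taken to mean here; the natural reading, and the one used above, is the standard subgraph convention $V(G') \subseteq V(G)$ and $E(G') \subseteq E(G)$, under which the argument is a one-liner. This fact is nevertheless worth stating explicitly because it is exactly what licenses the reduction mentioned in the introduction: an envy-free protocol on a connected graph $G$ automatically yields an envy-free protocol on any spanning tree of $G$, which in turn motivates focusing on trees in the next section.
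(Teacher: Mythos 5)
Your proof is correct and matches the paper's (implicit) reasoning: the paper states this as a \emph{Fact} without proof precisely because it follows immediately from $N_{G'}(i) \subseteq N_G(i)$ and restricting the envy-free inequalities, which is exactly your argument.
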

\begin{Cor}
  An envy-free allocation is also envy-free on any graph $G$.
\end{Cor}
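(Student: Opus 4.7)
The plan is to deduce the corollary directly from Fact~\ref{prop:subgraph} by observing that classical envy-freeness is nothing but envy-freeness on the complete graph $K_n$. First I would note that the definition of envy-free on a network $G$ requires $v_i(A_i) \ge v_i(A_j)$ only for $j \in N(i)$, whereas the classical envy-free definition demands this for all $j \ne i$; when $G = K_n$, we have $N(i) = [n] \setminus \{i\}$ and the two conditions coincide. Hence a classically envy-free allocation $A$ is precisely an allocation that is envy-free on $K_n$.

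Next, I would observe that every simple undirected graph $G = (V, E)$ on the same $n$ vertices is a subgraph of $K_n$, since $E \subseteq \binom{V}{2}$ by definition. Applying Fact~\ref{prop:subgraph} with the larger graph $K_n$ and the subgraph $G$, we conclude that $A$ is envy-free on $G$ as well.

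There is essentially no obstacle here: the corollary is a one-line consequence of Fact~\ref{prop:subgraph} once one identifies classical envy-freeness with envy-freeness on $K_n$. The only thing to be slightly careful about is making the identification with $K_n$ explicit, so that the application of Fact~\ref{prop:subgraph} is unambiguous. No calculation or case analysis is required, and the proof fits comfortably in two or three sentences.
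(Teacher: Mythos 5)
Your proof is correct and matches the paper's intended argument exactly: the paper states the corollary as an immediate consequence of Fact~\ref{prop:subgraph}, having already noted that classical envy-freeness corresponds to envy-freeness on the complete graph, of which every graph on the same vertex set is a subgraph. Nothing is missing.
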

The corollary implies that the envy-free protocol recently proposed by Aziz and
Mackenzie~\cite{aziz2016discrete} would also give an envy-free allocation on any graph.
However, the protocol is highly involved and requires a large number of queries and cuts, which naturally raises the question of designing simpler protocols for
graphs with special properties.

\medskip
\noindent\fbox{
\parbox{\textwidth}{
  {\bf Goal 1:} Design simple protocols that could produce envy-free allocations on certain types of graphs.
}
}
\medskip

While envy-freeness is closed under the operation of edge removal, proportionality is not, as removing edges changes the set of neighbors and thus also the average of neighbors' values. 
Therefore, although a number of proportional protocols are known, they do not readily translate to proportional allocations on subgraphs in any straightforward way.

At a first glance, even the existence of a proportional allocation on all graphs is not obvious.
Interestingly, the existence can be guaranteed from that of an envy-free
allocation, since any envy-free allocation is also proportional on the same
network.
Therefore an envy-free protocol for complete graphs would also produce a proportional
allocation on any graph. In light of this, we turn to
the quests for simpler
proportional protocols on interesting graph families as in the envy-free case.

\medskip
\noindent\fbox{
\parbox{\textwidth}{
{\bf Goal 2:} Design simple protocols that could produce proportional allocations
on certain types of graphs.
}
}
\medskip

\subsection{Motivations for the Two Graph Families}\label{subsec:motivation}
In this paper we focus on two graph families, trees and descendant graphs. 
Note that for both envy-freeness and proportionality, we only need to consider connected graphs (since otherwise we can focus on the easiest connected component). 
Without an envy-free algorithm for trees, there would be no chance to design envy-free protocols for more complicated graph families (see Fact~\ref{prop:subgraph}). 

For descendant graphs, recall that a descendant graph is obtained by
connecting all ancestor-descendant pairs of a rooted tree (cf.
Section~\ref{sec:prop-tree} for a formal definition). 
Descendant graphs can be used to model the relations among members in
an extended family, where each edge connects a pair of ancestor and descendant. It
can also model the management hierarchy in a company, where each edge connects a
superior and a subordinate. From a theoretical viewpoint, we note that
for any undirected graph $G$, if we run a depth-first search, then all edges in $G$ are either edges on the DFS tree $T$ or a back edge of $T$. Thus $G$ is a subgraph of the upward closure $T^{uc}$ of $T$, where $T^{uc}$ is obtained from $T$ by connecting all (ancestor, descendant) pairs. Note that $T^{uc}$ is a descendant graph. Therefore, if there is an envy-free allocation protocol for the family of descendant graphs, then there is an envy-free allocation protocol for any graph.
This indicates that envy-freeness for descendant graphs
may be hard. Interestingly, if we relax to proportionality, we do get a protocol
on this family, as shown in Section~\ref{sec:prop-tree}.

\section{Envy-Freeness on Trees}
\label{sec:envy-freeness-trees}
We first present an algorithm that produces an envy-free allocation on trees. The
algorithm makes use of the {\em Austin moving-knife procedure}~\cite{austin1982sharing}
as a subroutine.
Given a rooted tree $T$, let $|T|$ denote the number of vertices. For any vertex
$i$ in $T$, denote by $T(i)$ the subtree rooted at vertex $i$.

\begin{Def}[\cite{austin1982sharing}]
An {\em Austin moving-knife procedure \Cut$(i, j, n, S)$} takes two agents $i, j$,
an integer $n>0$ and a subset $S$ of cake as input parameters, and outputs a
partition of $S$ into $n$ parts, such that \emph{both} agents value these $n$
pieces as all equal, each of value exactly $1/n$ fraction of the value of $S$. An Austin procedure needs $2n$ cuts.
\end{Def}

It should be noted that this is a continuous procedure, and it is not known that
whether it can be implemented by a discrete algorithm. With that being said, to the best of
our knowledge, it is not known whether Austin's moving knife procedure can help to
obtain an algorithm achieving envy-freeness (on complete graphs) that is
simpler than the algorithm in \cite{aziz2016discrete}.

Now we are ready to present our allocation algorithm in Algorithm \AllocationTree, which calls a sub-procedure given in Algorithm \AlgTree.
\begin{algorithm}
  \caption{\AlgTree$(T, r, (A_1, \ldots, A_n))$}
  \label{alg:tree-sub}
  \begin{algorithmic}[1]
    \REQUIRE  Tree $T$ with root vertex $r$, size of tree $n$, an allocation $A = (A_1, \ldots, A_n)$.

    \FOR {each immediate child $i$ of root $r$}
        \STATE Among all remaining pieces of $A_1, \ldots, A_n$, pick $|T(i)|$ pieces that agent $i$ values the highest.
        \STATE Let $S_i$ denote the union of these $|T(i)|$ pieces.
    \ENDFOR
    \STATE \label{step: to root} Allocate the remaining one piece to $r$.
    \FOR {each child $i$ of root $r$}
        \STATE \label{step:Austin-cut} Apply \Cut$(i, r, |T(i)|, S_i)$ to divide $S_i$ into $|T(i)|$ equal parts $X_1, \ldots, X_{|T(i)|}$ for $i$ and $r$.
        \STATE Run \AlgTree$(T(i), i, (X_1, \ldots, X_{|T(i)|}))$.
    \ENDFOR
  \end{algorithmic}
\end{algorithm}

\begin{algorithm}
	\caption{\AllocationTree$(T, r)$}
	\label{alg:tree-main}
	\begin{algorithmic}[1]
		\REQUIRE  Tree $T$ with root vertex $r$.

		\STATE Let $r$ cut the cake into $n=|T|$ equal parts $(A_1, \ldots, A_n)$ with respect to her valuation.
		\STATE Run \AlgTree$(T, r, (A_1, \ldots, A_n))$.
	\end{algorithmic}
\end{algorithm}

\begin{Thm}
  For any $n$-node tree $T$ with root $r$, Algorithm \AllocationTree$(T, r)$ outputs an allocation that is envy-free on $T$ with $O(n^2)$ cuts.
\end{Thm}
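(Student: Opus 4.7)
The plan is to prove envy-freeness by induction on the recursion of \AlgTree, maintaining a clean loop invariant. Write $P_1, \ldots, P_{n'}$ for the input pieces at a recursive call \AlgTree$(T', r', (P_1, \ldots, P_{n'}))$, and let $C' := \bigcup_{j=1}^{n'} P_j$. The invariant I will carry is: $n' = |T'|$ and each piece is $v_{r'}$-equal, i.e.\ $v_{r'}(P_j) = v_{r'}(C')/n'$ for every $j$. This holds at the top-level call of \AllocationTree because $r$ cuts the cake into $n$ $v_r$-equal pieces. It is preserved by each recursive call on $T(i)$ rooted at $i$, because the Austin procedure \Cut$(i, r', |T(i)|, S_i)$ by definition outputs pieces that are equally valued by \emph{both} $i$ and $r'$, so in particular the pieces handed into the recursion satisfy the invariant with respect to the new root $i$.

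With the invariant in hand, I will prove envy-freeness on $T'$ by induction on $|T'|$. Since tree edges only connect a vertex to its immediate children, and the entire allocation within each subtree $T(i)$ lies inside $S_i$, it suffices to verify three things: (a) $r'$ does not envy any child; (b) no child of $r'$ envies $r'$; and (c) envy-freeness holds inside each $T(i)$, which is immediate from the inductive hypothesis applied to the recursive call. For (a), child $i$ plays the role of root in the recursion on $T(i)$ and therefore receives the single leftover piece from its own Step~\ref{step: to root}, which must be one of the Austin-equalized pieces $X_k$ of $S_i$; hence $v_{r'}(A_i) = v_{r'}(S_i)/|T(i)|$, and by the invariant $v_{r'}(S_i) = |T(i)| \cdot v_{r'}(C')/n'$, so $v_{r'}(A_i) = v_{r'}(C')/n' = v_{r'}(A_{r'})$. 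For (b), since $i$ selects the $|T(i)|$ highest-$v_i$-valued pieces from whatever pool is still available when its turn comes, every piece in $S_i$ has $v_i$-value at least that of the leftover piece $A_{r'}$; averaging gives $v_i(A_i) = v_i(S_i)/|T(i)| \geq v_i(A_{r'})$.

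For the cut count I would simply sum contributions. The top-level equal partition uses $n-1$ cuts, and at each internal node $v$ the algorithm performs an Austin cut of $S_c$ into $|T(c)|$ parts for every child $c$ of $v$, consuming $2|T(c)|$ cuts; summing over children of $v$ gives $2(|T(v)|-1)$ cuts at $v$. Summing over all internal nodes yields $\sum_v 2(|T(v)|-1) = O\!\left(\sum_v |T(v)|\right) = O(n^2)$, with the worst case attained by a path.

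The main conceptual hurdle will be step (a): one might worry that after the recursion on $T(i)$ subdivides $S_i$ among $i$ and its descendants, $r'$ could come to envy $i$'s final share. The resolution is that the Austin cut freezes, from $r'$'s perspective, the equal-valuation of all $|T(i)|$ pieces of $S_i$, and in the recursion $i$ retains exactly one of these pieces regardless of how its descendants split the rest. This is precisely why \Cut is invoked with both $i$ and $r'$ rather than with $i$ alone, and why the invariant needs to track $r'$'s valuations of the input pieces rather than, say, the individual valuations of the agents who will eventually receive them.
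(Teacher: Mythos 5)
Your proposal is correct and follows essentially the same route as the paper's proof: the same three-way case split (root vs.\ children, children vs.\ root, induction inside subtrees), the same reliance on the Austin procedure's two-agent equality to guarantee the parent still values the child's retained piece at exactly $1/n'$, and the same $O(n^2)$ cut count. Your explicit loop invariant (input pieces are equal under the current root's valuation) is just a cleaner packaging of what the paper's height induction uses implicitly.
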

\begin{proof}
	We will prove this by induction on the height of the tree. The base case is when the tree has only one vertex, the root $r$. In this case $n=1$ and the algorithm simply gives the whole set $A_1$ to $r$, and the envy-free property holds trivially. Now we assume that the theorem holds for trees of height $d$ and will prove it for any tree $T$ of height $d+1$. First, the root $r$ receives exactly $1/n$ as she cuts the cake into $n$ equal pieces (in Algorithm \AllocationTree) and gets one of them (in line \ref{step: to root} of Algorithm \AlgTree). In addition, in $r$'s valuation, each child $i$ (of $r$) gets $|T(i)|/n$ fraction of the total utility (the first \textbf{for} loop in Algorithm \AlgTree),
  and cuts it into $|T(i)|$ equal pieces (with respect to $r$'s valuation, in line \ref{step:Austin-cut} in Algorithm \AlgTree) and finally gets one of them. So $r$ thinks that child $i$'s share is also worth $1/n$ utility, same as hers. Thus $r$ does not envy any of her children.

	Also note that all children of $r$ pick pieces before $r$ does, so each child
	$i$ values her part $S_i$ at least $|T(i)|$ times of what $r$ gets. Then the
	second \textbf{for} loop in Algorithm \AlgTree cuts this part into $|T(i)|$
	equal pieces (with respect to child $i$'s valuation, in line
	\ref{step:Austin-cut} in Algorithm \AlgTree).
  Since child $i$ gets one of these pieces (in line \ref{step: to root} of the recursive call of Algorithm \AlgTree) she views this piece at least $\frac{1}{|T(i)|}\cdot |T(i)| = 1$ times what $r$ gets. Namely, child $i$ does not envy root $r$.

	Among vertices inside each subtree rooted at child $i$ of $r$, no
	envy occurs by the inductive hypothesis. Putting everything together, we can
	see that there is no envy between any pair of connected vertices.

	Finally we analyze the number of cuts of the algorithm. Recall that \Cut$(i,j,n,S)$ makes at most $2n$ cuts. When each vertex $u$ is being processed in \AlgTree, it makes at most $\sum_{i: \text{child of }u} 2|T_i| \leq 2n$ cuts. Thus in total the algorithm requires $O(n^2)$ cuts.
\end{proof}

\section{Proportionality on Descendant Graphs}
\label{sec:prop-tree}

We first define descendant graphs formally.

\begin{Def}
  An undirected graph $G = (V,E)$ is called a {\em descendant graph on a rooted
  tree
  $T$},
  if $T$ is a spanning tree of $G$, and there exists a vertex $r$ such that for
  any two vertices $i$ and $j$, $(i,j)\in E$ if and only if $(i,j)$ is an
  ancestor-descendant pair on tree $T$ rooted at $r$. 
\end{Def}
In other words, a descendant graph $G$ of a rooted tree $T$ can be obtained from $T$
by connecting all of ancestor-descendant pairs.

\begin{figure}[!ht]
	\centering
    \subfigure[a tree $T$ with the topmost node being its root ]{
        \includegraphics[height=1.8in]{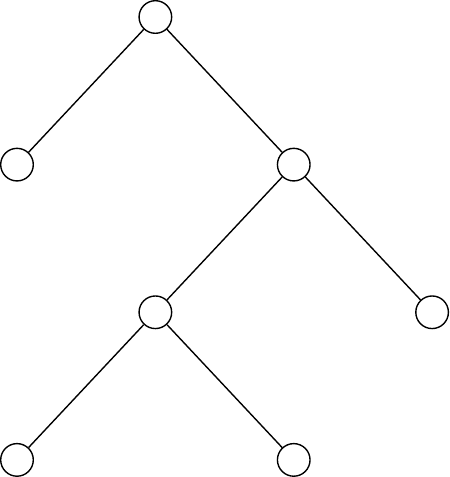}
    }
    \qquad\qquad\qquad\qquad
    \subfigure[the descendant graph on $T$]{
        \includegraphics[height=1.8in]{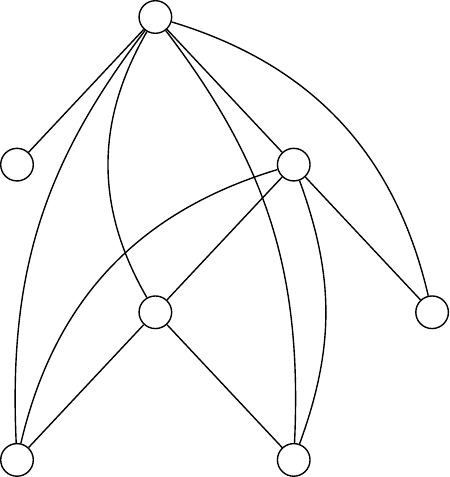}
    }
	\caption{A 7-node tree and its descendant graph}
\end{figure}


We now present an algorithm that produces a proportional allocation on descendant
graphs.
The idea of the algorithm can be described as a process of collecting and distributing: We start with the root vertex holding the whole cake, and process the tree in top-down fashion.
Each vertex $v$, when being processed, applies a three-step procedure. (1) {\em Collect phase:} agent $v$ collects all cake pieces that she has received (from her ancestors); (2) {\em Cut phase:} agent $v$ cuts these cakes into $f(v)$ equal pieces according to her own evaluation, for some function $f(v)$ to be defined later; (3) {\em Distribute phase:} let each descendant of $v$ pick certain number of these pieces that they value the highest.

To formally describe the algorithm, we shall need the following notation. For a
vertex $v \in T$, let $d(v)$ denote the depth of vertex $v$ in tree $T$ (i.e. the number of edges on the unique path from $v$ to the root), $d = \max_v d(v)$ denote the depth of $T$.

We then define a function $f:T\to \N$ as
\begin{equation}\label{eq:def_fv}
f(v)=\left(\frac{d(v)+|T(v)|}{d(v)+1}\right) \cdot d!.
%
\end{equation}
We first make some easy observations from this definition.
\begin{itemize}
    \item When $d(v)\geq 1$, $d(v)$ divides $f(v)$. This guarantees that in the algorithm
    described below, each distribute step reallocates an integral number of slices.
    \item $f(r) = |T| \cdot d!$ for root $r$ of the tree.
    \item $f(v) = d!$ for every leaf $v$ of the tree.
\end{itemize}

\begin{Prop}\label{prop:induct_fv}
    Function $f(v)$ satisfies
\begin{equation}\label{eq:induct_fv}
f(v)=d!+\sum_{\substack{u \in T(v) \\ u \neq v}}\frac{f(u)}{d(u)}
\end{equation}
\end{Prop}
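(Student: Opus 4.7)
The plan is to prove the identity by induction on the height of the subtree $T(v)$, rather than trying to evaluate the sum on the right-hand side of \eqref{eq:induct_fv} in closed form directly.

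For the base case, when $v$ is a leaf, $|T(v)|=1$, so the definition \eqref{eq:def_fv} gives $f(v) = \frac{d(v)+1}{d(v)+1}\cdot d! = d!$, and the sum in \eqref{eq:induct_fv} is empty, so both sides equal $d!$ and the identity holds.

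For the inductive step, suppose the identity is true for every proper descendant of $v$, and let $c_1,\ldots,c_k$ denote the children of $v$. The subtrees $T(c_1),\ldots,T(c_k)$ partition $T(v)\setminus\{v\}$, so the right-hand side of \eqref{eq:induct_fv} can be rewritten as
\[
d! + \sum_{i=1}^{k}\left(\frac{f(c_i)}{d(c_i)} + \sum_{\substack{u\in T(c_i)\\ u\neq c_i}}\frac{f(u)}{d(u)}\right).
\]
Applying the inductive hypothesis to each $c_i$ converts the inner sum into $f(c_i)-d!$, so the right-hand side becomes $d! - k\cdot d! + \sum_{i=1}^{k}\bigl(\tfrac{f(c_i)}{d(c_i)} + f(c_i)\bigr)$. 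I would then simplify the bracketed term using $d(c_i)=d(v)+1$ and the definition of $f(c_i)$, which yields $f(c_i)\cdot\frac{d(c_i)+1}{d(c_i)} = \frac{(d(v)+1)+|T(c_i)|}{d(v)+1}\cdot d!$. Summing over $i$ and using $\sum_i|T(c_i)| = |T(v)|-1$ collapses everything into
\[
d! + \frac{|T(v)|-1}{d(v)+1}\cdot d! \;=\; \frac{d(v)+|T(v)|}{d(v)+1}\cdot d! \;=\; f(v),
\]
which matches the left-hand side.

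Conceptually there is no obstacle here: the identity is essentially forced by the way $f$ accumulates contributions from descendants, and once the inductive hypothesis is plugged in, the proof is a line of algebra. The only place where one has to be a bit careful is in tracking the $-k\cdot d!$ correction that arises from subtracting one copy of $d!$ for each child, and in using the recurrence $d(c_i)=d(v)+1$ correctly to turn $\tfrac{f(c_i)}{d(c_i)}+f(c_i)$ into a form proportional to $d!/(d(v)+1)$. That cancellation with the telescoping sum $\sum_i |T(c_i)| = |T(v)|-1$ is the step I would check most carefully before writing the calculation down.
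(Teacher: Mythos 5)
Your proof is correct, but it takes a genuinely different route from the paper's. The paper gives a direct, non-inductive calculation: it normalizes $g(v)=f(v)/d!$, expands $\sum_{u\in T(v),\,u\neq v} g(u)/d(u)$ using $|T(u)|=\sum_{u'\in T(u)}1$, swaps the order of summation, and evaluates the inner sum by telescoping $\sum_k \frac{1}{k(k+1)}=\sum_k(\frac1k-\frac1{k+1})$ along root-to-descendant paths. You instead prove the identity by structural induction on the subtree: the base case at the leaves uses $f(v)=d!$, and the inductive step decomposes $T(v)\setminus\{v\}$ into the children's subtrees, replaces each inner sum by $f(c_i)-d!$ via the hypothesis, and uses $d(c_i)=d(v)+1$ together with $\sum_i|T(c_i)|=|T(v)|-1$ to collapse the result to $f(v)$; I checked the algebra, including the cancellation of the $k\cdot d!$ terms, and it goes through. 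Your induction is more local and arguably easier to verify, since it only relates $f(v)$ to $f$ at the children and sidesteps the double-sum interchange; the paper's computation is global and self-contained in one pass, at the cost of the more delicate summation-swap and telescoping step. Either argument is a valid proof of Proposition~\ref{prop:induct_fv}.
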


\begin{proof}
    Define $g(v) = \frac{f(v)}{d!}$. Thus $g(v) = \frac{d(v)+|T(v)|}{d(v)+1}$.

    For each vertex $v$, we have

    \begin{align*}
        &  \sum_{\substack{u \in T(v)\\ u \neq v}}{\frac{g(u)}{d(u)}} = \sum_{\substack{u \in T(v)\\ u \neq v}}\frac{1}{d(u)}\left(\frac{d(u)+|T(u)|}{d(u)+1}\right) \\
        = & \sum_{\substack{u \in T(v)\\ u \neq v}}\left(\frac{1}{d(u)+1} + \frac{|T(u)|}{d(u)(d(u)+1)}\right) \\
        = & \sum_{\substack{u \in T(v)\\ u \neq v}}\left(\frac{1}{d(u)+1} + \frac{\sum_{u' \in T(u)}1}{d(u)(d(u)+1)}\right) \\
        = & \sum_{\substack{u \in T(v)\\ u \neq v}}\frac{1}{d(u)+1}+\sum_{\substack{u \in T(v)\\ u \neq v}}\sum_{u'\in T(u)}\frac{1}{d(u)(d(u)+1)}\\
        = & \sum_{\substack{u \in T(v)\\ u \neq v}}\frac{1}{d(u)+1}+\sum_{\substack{u' \in T(v)\\ u' \neq v}}\underbrace{\sum_{\substack{u\in T(v)\\u \neq v\\u'\in T(u)}}\frac{1}{d(u)(d(u)+1)}}_{X}.
    \end{align*}

    Now let us take a closer look at term $X$. Note that it is summed over all vertices $u$ at the path between $v$(exclusive) and $u'$(inclusive). These vertices have depth $d(v)+1, d(v)+2, \ldots, d(u')$ in tree $T$, respectively. Thus we have
    \begin{align*}
        X & = \sum_{k=d(v)+1}^{d(u')}\frac{1}{k(k+1)} = \sum_{k=d(v)+1}^{d(u')}\left(\frac1k - \frac1{k+1}\right) \\
        & = \frac1{d(v)+1} - \frac1{d(u')+1}.
    \end{align*}
    Plugging this back to the previous formula gives
    \begin{align*}
        & \sum_{\substack{u \in T(v)\\ u \neq v}}\frac{1}{d(u)+1}+\sum_{\substack{u' \in T(v)\\ u' \neq v}}
        \left(\frac1{d(v)+1} - \frac1{d(u')+1}\right) \\
        = & \sum_{\substack{u \in T(v)\\ u \neq v}}\frac{1}{d(v)+1} = \frac{|T(v)|-1}{d(v)+1}.
    \end{align*}
    Therefore, $$1 + \sum_{\substack{u \in T(v)\\ u \neq v}}{\frac{g(u)}{d(u)}} = \frac{d(v)+|T(v)|}{d(v)+1} = g(v).$$

    To summarize, we just showed $$g(v) = 1+\sum_{\substack{u \in T(v)\\ u \neq v}}\frac{g(u)}{d(u)}.$$
    Multiplying both sides by $d!$ completes the proof.
\end{proof}

The proportional allocation algorithm is formally presented as below.

\def\z{
\medmuskip=1mu
\thinmuskip=1mu
\thickmuskip=1mu
}

\vspace{1em}
\begin{algorithm}
  \caption{\AlgTreeLike$(G)$}
  \label{alg:tree-like}
  \begin{algorithmic}[1]
    \REQUIRE  Descendant graph $G$ of tree $T$ and
    root vertex $r$, size of tree $n$, depth of tree $d$

     \FOR {$u \in T$ in increasing order of $d(u)$}
         \item[] // {\Large \strut} collect and cut phase:
         \STATE \label{step:collect}
         Let $X^u$ be the union of all cake pieces that vertex $u$ possesses
         \STATE \label{step:slice}
         Slice $X^u$ into $f(u)$ equal pieces $X^u_1, \ldots X^u_{f(u)}$ according to $u$'s evaluation function
         \item[] // {\Large \strut} distribute phase:
         \FOR {$v \in T(u)-\{u\}$ in increasing order of $d(v)$}
             \STATE Among all the remaining pieces of $X^u_1, \ldots, X^u_{f(u)}$, agent $v$ takes $f(v)/d(v)$ pieces that she values the highest.
         \ENDFOR \label{step:redistribute}
	\ENDFOR
  \end{algorithmic}
\end{algorithm}
\vspace{1em}

Figure~\ref{fig:fig2} illustrates the $f$ values and the execution of the
algorithm on
a simple 5-node tree.

\begin{figure*}[!ht]
	\centering
    \subfigure[$f$ values]{
        \label{fig:fvalues}
        \includegraphics[height=1.3in]{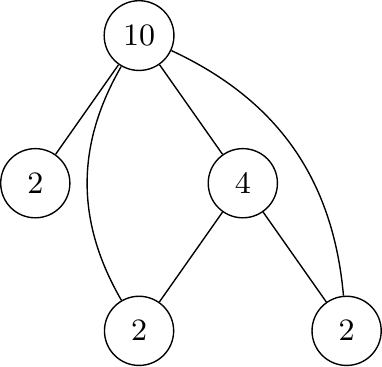}
    }
    \subfigure[execution]{
        \label{fig:exec}
        \includegraphics[height=1.3in]{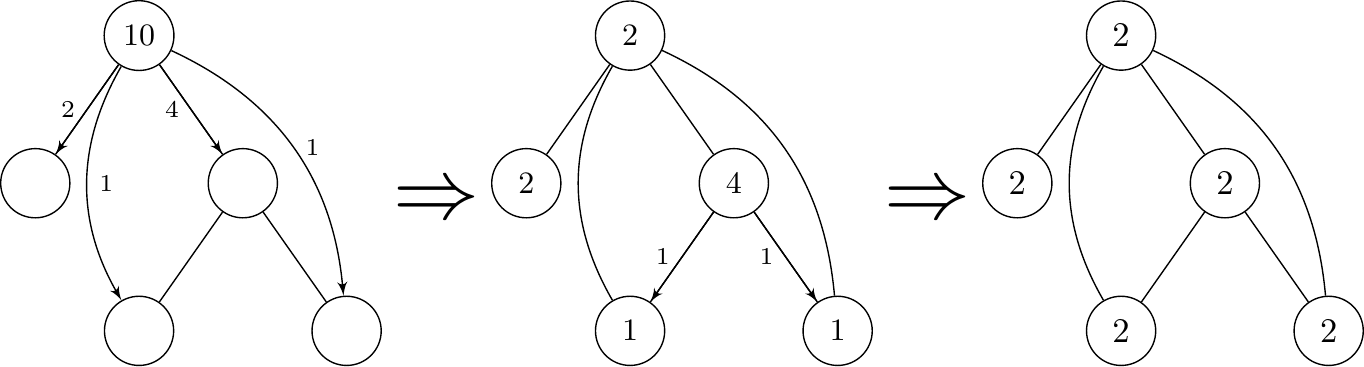}
    }
	\caption{$f$ values of and execution of Algorithm \ref{alg:tree-like} on a simple 5-node tree.}
  \label{fig:fig2}
\end{figure*}

From the algorithm description and equations~(\ref{eq:def_fv})
and~(\ref{eq:induct_fv}), it is easy to observe the following properties on the
number of cake slices during the algorithm.

\vspace{1em}

\noindent\hrulefill

\noindent{\bf Properties:}
  \begin{enumerate}
    \item[(1)] Every agent $v$, except the root of the tree, received $f(v)$ pieces of the cake in total from her ancestors.
    \item[(2)] Every agent will have exactly $d!$ slices of cake in her possession after the algorithm terminates.
  \end{enumerate}
\noindent\hrulefill
\vspace{0.5em}


\begin{Thm}
  For any $n$-node descendant graph $G$ of a tree, \AlgTreeLike$(G)$ outputs an
  allocation that is proportional on $G$ with at most $n^2\cdot d!$ cuts.
\end{Thm}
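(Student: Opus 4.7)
The plan is to verify proportionality by splitting $\sum_{j \in N(i)} v_i(A_j)$ into an ancestor sum $S_A$ (over strict ancestors of $i$ in $T$) and a descendant sum $S_D$ (over $v \in T(i)\setminus\{i\}$), and showing that $S_A \leq d(i)\,v_i(A_i)$ and $S_D \leq (|T(i)|-1)\,v_i(A_i)$; since $|N(i)| = d(i) + |T(i)| - 1$, adding these gives proportionality on $G$. The engine of the whole argument is the identity $v_i(A_i) = (d!/f(i))\cdot v_i(X^i) = v_i(X^i)\cdot(d(i)+1)/(d(i)+|T(i)|)$, which holds because $i$ slices $X^i$ into $f(i)$ pieces of equal $v_i$-value and keeps exactly $d!$ of them. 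The ancestor bound is straightforward from the greedy-pick rule: for any strict ancestor $u$ of $i$, the $d!$ pieces constituting $A_u$ are precisely what nobody else took, hence were on the table when $i$ picked from $X^u$; since $i$ selects the $f(i)/d(i)$ best pieces in her own view, every piece of $A_u$ has $v_i$-value at most the average $v_i(P_i^u)/(f(i)/d(i))$, where $P_i^u$ denotes $i$'s picks from $X^u$. Summing this bound over $u$ and using the identity yields $S_A \leq d(i)\,v_i(A_i)$.

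The descendant bound is the main obstacle. A cake point in $A_v$ for $v \in T(i)\setminus\{i\}$ may have been routed from $X^u$ (for some strict ancestor $u$ of $i$) into the subtree below $i$ without ever entering $X^i$, so a direct $X^i$-based accounting misses some of $S_D$. My approach is to partition $\bigcup_{v \in T(i)\setminus\{i\}} A_v$ into (a) ``through-$i$'' points $X^i \setminus A_i$ that $i$ herself distributed after slicing $X^i$, and (b) ``bypass'' points transferred from some $X^u$ (with $u$ a strict ancestor of $i$) directly into some $X^v$ with $v \in T(i)\setminus\{i\}$, without ever entering $X^i$. Group (a) contributes exactly $v_i(X^i) - v_i(A_i) = v_i(A_i)\cdot(|T(i)|-1)/(d(i)+1)$ by the identity. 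For group (b), the key observation is that every bypass slice was taken by a vertex of depth strictly greater than $d(i)$, hence strictly later than $i$ in $u$'s distribution phase, so the slice was available to $i$ at her turn and the same per-piece $v_i$-bound as above applies. By Proposition~\ref{prop:induct_fv} applied at $i$, the number of bypass slices coming out of each $X^u$ equals $\sum_{v \in T(i)\setminus\{i\}} f(v)/d(v) = f(i) - d!$, so summing the per-piece bound over all strict ancestors gives $v_i(\text{bypass}) \leq d(i)(|T(i)|-1)/(d(i)+1)\cdot v_i(A_i)$. Adding (a) and (b) yields $S_D \leq (|T(i)|-1)\,v_i(A_i)$.

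Combining the two bounds establishes $|N(i)|\,v_i(A_i) \geq S_A + S_D$, which is proportionality on $G$. For the cut count, vertex $u$'s slicing of $X^u$ into $f(u)$ equal pieces uses at most $f(u)-1$ cuts, and since $d(u)+|T(u)| \leq n$ implies $f(u) \leq n\cdot d!$, summing over the $n$ vertices bounds the total number of cuts by $n^2\cdot d!$.
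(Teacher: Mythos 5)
Your proof is correct and follows essentially the same route as the paper's: the same decomposition of the neighbors' holdings into what ancestors keep, what reaches $i$'s descendants through $X^i$, and what bypasses $i$, the same greedy-pick comparison against $i$'s own $f(i)/d(i)$ picks from each $X^u$, and the same use of Proposition~\ref{prop:induct_fv} to count the $f(i)-d!$ bypass slices. The only difference is organizational: you aggregate everything relative to $v_i(X^i)$ into two bounds ($S_A$ and $S_D$), whereas the paper proves one inequality per ancestor $u$ and sums over ancestors; the underlying four (in)equalities are identical.
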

\begin{proof}

%

First note that the root $r$ cuts the cake into $|T|\cdot d!$ equal pieces and finally gets $d!$ of them, so its value is $1/|T|$ fraction of that of the whole cake. As the root connects to all nodes in the graph, it achieves exactly the average of its neighbors.

Now we consider an arbitrary node $v$ of depth at least 1. Let $N(v)$ be the set
of $v$'s neighbors. Furthermore, let $N_a(v)$ be the set of $v$'s ancestors in
$T$, and $N_d(v)$ the set of $v$'s descendants in $T$. It is clear that
$N(v)=N_a(v)\cup N_d(v)$. Also note that, as $G$ is a descendant graph, $N_d(v) =
T(v)
\setminus \{v\}$.

Let $A_v$ be the final allocation of agent $v$ at the end of the algorithm. We are
interested in $S = \bigcup_{i \in N(v)}A_i$, the union of all pieces of cake
belonging to agents in $N(v)$. Note that $S$ consists of the following two
components:
\begin{itemize}
  \item {\em Those held by $v$'s ancestors:} Each $A_u$ for $u \in N_a(v)$
  contains $d!$
  slices from $\{X^u_i\}_{i=1,\ldots,f(u)}$. These are the leftover slices after
  the distribute phase of agent $u$.
  \item {\em Those held by $v$'s descendants:} $\bigcup_{w \in N_d(v)}A_w$
  consists of two parts:
  \begin{enumerate}
    \item  The slices from $\{X^u_i\}$ distributed
    by agent $u \in N_a(v)$ in their distribute phase to agents in $N_d(v)$.
    For each $u \in N_a(v)$, each agent $w \in N_d(v)$ picks $\frac{f(w)}{d(w)}$ slices from $\{X^u_i\}$.
    In total, agent $u$ distributes $\sum_{w \in N_d(v)}\frac{f(w)}{d(w)}$ slices of cake to agents in $N_d(v)$.
    From Eq~\eqref{eq:induct_fv}, this number equals to $f(v) - d!$.
    \item The $f(v) - d!$ slices from $\{X^v_i\}$ distributed by agent $v$ in her distribute phase to her descendants. Note that $X^v$ comes from all ancestors $u$.
  \end{enumerate}
\end{itemize}

Note that each ancestor $u$ of $v$ has her collection $X^u$ distributed first to
$v$, then to $v$'s descendants, and finally to $u$ herself. Denote these parts by
$S_v^u$, $S_d^u$ and $S_u^u$, respectively. Agent $v$ later distributes exactly
$\frac{f(v)-d!}{f(v)}$ fraction of $S_v^u$ to its descents, and keeps
$\frac{d!}{f(v)}$ fraction of $S_v^u$ to herself. Let us denote these two parts by
$S_{vd}^u$ and $S_{vv}^u$, respectively. To avoid potential confusion of notation,
we use $\alpha_v$ to denote the valuation function of agent $v$. We will show the
following inequality.
\begin{align}\label{eq:u-proportional}
	\alpha_v(S_{vv}^u) \ge \frac{\alpha_v(S_u^u) + \alpha_v(S_d^u) + \alpha_v(S_{vd}^u)}{d(v)+T(v)-1}.
\end{align}
Once this is shown for all ancestors $u$ of $v$, we can sum these inequalities over all ancestors $u$ and obtain
\begin{align}\label{eq:proportional}
\alpha_v(A_v) \ge \frac{\alpha_v(S_a) + \alpha_v(S_d) + \alpha_v(S_{vd})}{d(v)+T(v)-1},
\end{align}
where we used the facts that
\begin{itemize}
	\item $A_v = \uplus_u S_{vv}^u$ is what $v$ finally has, where the notation $\uplus_u$ stands for the disjoint union over all ancestors $u$ of $v$.
	\item $S_a = \uplus_u S_u^u$ is the part of cake that $v$'s ancestors collectively have,
	\item $S_d = \uplus_u S_d^u$ is the set of pieces that $v$'s ancestors give to $v$'s descendants, and
	\item $S_{vd} = \uplus_u S_{vd}^u$ is the set of pieces that $v$ gives to its descendants.
\end{itemize}
Note that the numerator in the right hand side of Eq. \eqref{eq:proportional} is exactly the total value of $N(v)$, and the denominator is exactly the size of $N(v)$. Thus the inequality is actually \[\alpha_v(A_v) \ge \alpha_v(N(v))/|N(v)|,\] as the proportionality requires.

\medskip
So it remains to prove Eq. \eqref{eq:u-proportional}. We will examine the four sets involved in this inequality one by one, and represent or bound them all in terms of $\alpha_v(S_v^u)$.
\begin{itemize}\setlength\itemsep{1em}
	\item $\alpha_v(S_{vv}^u) = \alpha_v(S_v^u) \cdot d!/f(v)$, as $v$ divides $S_v^u$ into $f(v)$ equal pieces and takes $d!$ of them.
	\item $\frac{\alpha_v(S_u^u)}{d!} \le \frac{\alpha_v(S_v^u)}{f(v)/d(v)}$, as $v$ takes $f(v)/d(v)$ pieces of $\{X_i^u:i\in [f(u)]\}$ \emph{before} $u$ is left with $d!$ pieces.
	\item $\frac{\alpha_v(S_d^u)}{f(v)-d!} \le \frac{\alpha_v(S_v^u)}{f(v)/d(v)}$, as $v$ takes $f(v)/d(v)$ pieces of $\{X_i^u:i\in [f(u)]\}$ \emph{before} its descendants collectively take $(f(v)-d!)$ pieces.
	\item $\alpha_v(S_{vd}^u)= \alpha_v(S_v^u) \cdot (f(v)-d!)/f(v)$ as $v$ divides $S_v^u$ into $f(v)$ equal pieces and pass $(f(v)-d!)$ of them to descendants.
\end{itemize}
Putting these four (in)equalities and the definition of $f(v)$ in
Eq.~\eqref{eq:def_fv} together, one can easily verify
Eq.~\eqref{eq:u-proportional}. This completes the proof of the proportionality.

For the number of cuts, each agent $v$, when being processed, makes $f(v)$ cuts in the cut phase. In total the algorithm requires $\sum_{v}f(v) \leq n\cdot f(r) \leq n^2\cdot d!$ cuts.
\end{proof}

Note that though the number of cuts required here is exponential, this singly exponential bound is much better than the one for the general protocol in \cite{aziz2016discrete}.

\section{Conclusion}
This paper introduces a graphical framework for fair allocation of divisible good, defines envy-freeness and proportionality on a graph, and proposes an envy-free allocation algorithm on trees and a proportional allocation algorithm on descendant graphs. The framework opens new research directions in developing simple and efficient algorithms that produce fair allocations under important special graph structures.

\section*{Acknowledgments}
We thank Ariel Procaccia and Nisarg Shah for pointing out~\cite{chevaleyre2007allocating,todo2011generalizing}, and an anonymous reviewer for pointing out~\cite{abebe2017fair} to us.

This work was supported by
Australian Research Council DECRA DE150100720 and Research
Grants
Council of the Hong Kong S.A.R. (Project no. CUHK14239416).
\newpage

\bibliographystyle{plain}
\bibliography{refs}

\end{document}